\documentclass{article}
\usepackage[T1]{fontenc}
\usepackage{amssymb}
\usepackage[misc,geometry]{ifsym}
\usepackage{graphicx}%
\usepackage{multirow}%
\usepackage{mathrsfs}%
\usepackage[title]{appendix}%
\usepackage{xcolor}%
\usepackage{textcomp}%
\usepackage{manyfoot}%
\usepackage{booktabs}%
\usepackage{algorithm}%
\usepackage{algorithmicx}%
\usepackage{algpseudocode}%
\usepackage{listings}%
\usepackage{amsthm}

\newtheorem{theorem}{Theorem}
\newtheorem{corollary}{Corollary}[theorem]
\newtheorem{lemma}[theorem]{Lemma}
\newtheorem{definition}{Definition}[section]

\newtheorem{prop}[theorem]{Proposition}

\newcommand\Ga{\mathcal{G}^\mathrm{add}}
\newcommand\Ge{\mathcal{G}^\mathrm{epex}}
\newcommand\GG{\mathcal{G}}

\begin{document}

\title{Weighted Clique and Independent Set in Edge-Distant Hereditary Graphs}
\author{
	Eshwar Srinivasan and Ramesh Hariharasubramanian \footnote{Corresponding Author}\\
	{\small Department of Mathematics,
	Indian Institute of Technology Guwahati,
	Assam - 781039, India}\\
	{\small s.eshwar@iitg.ac.in and ramesh\_h@iitg.ac.in}
}
	
	\maketitle
\begin{abstract}
    	In this work, we investigate the algorithmic aspects of two natural extensions of hereditary classes: the edge-apex class and the edge-add class, recently introduced by Singh and Sivaraman. These are defined as the graph classes obtained by at most one edge deletion or one non-edge addition, respectively, from a hereditary class~$\GG$. Building on earlier results showing that both classes remain hereditary and admit finite forbidden induced subgraph characterizations whenever $\GG$ does, we focus on the \emph{Weighted Maximum Clique Problem} (WMCP) and the \emph{Weighted Maximum Independent Set Problem} (WMISP).
		
		We first present algorithms for WMCP and WMISP on both the edge-apex and edge-add classes of hereditary graph classes. Extending this framework, we introduce the notion of the \emph{$\GG$-edge distance} of a graph~$G$, denoted by $\xi_{\GG}(G)$, which quantifies how far $G$ is from the class~$\GG$ in terms of the minimum number of edge deletions or non-edge additions needed to transform it into a member of~$\GG$.
		
		By parameterizing with respect to this distance, we show that both WMCP and WMISP can be solved in $O^*(2^k)$ time on graphs whose $\GG$-edge distance is~$k$, provided these problems admit polynomial-time algorithms within the class~$\GG$. This result extends earlier algorithmic characterizations of the single edge-apex and edge-add classes to the more general setting of $k$-edge-distant graphs. By combining our general results with known properties of transitive graphs, we show that WMCP and WMISP can be solved in $O^*(2^k)$ time for graphs with transitive-edge distance $k$.
		
	\end{abstract}

\section{Introduction}\label{sec1}

The study of hereditary graph classes is particularly interesting. Many algorithmic problems that are hard to solve in general become tractable when restricted to certain hereditary graph classes. However, not every graph class is hereditary. It is well known that a graph class is hereditary if and only if it can be characterized by minimal forbidden induced subgraphs. In \cite{MR3772387}, Borowiecki, Drgas-Burchardt, and Sidorowicz studied the $k$-apex class of a hereditary graph class $\GG$, consisting of graphs that are at most $k$ vertex deletions away from $\GG$. In \cite{MR4794045,singh2025hereditary}, Singh and Sivaraman extended this concept to the $p$-edge-apex and $q$-edge-add classes of a hereditary class $\GG$, consisting of graphs that are at most $p$ edge deletions and $q$ non-edge additions away from $\GG$, respectively. They showed that the $p$-edge-apex and $q$-edge-add classes of a hereditary graph class $\GG$ are themselves hereditary, and proved that they have finite forbidden induced subgraphs whenever $\GG$ has finite forbidden induced subgraphs.

The \emph{weighted maximum clique problem} (respectively, the \emph{weighted maximum independent set problem}) is to determine a clique (respectively, an independent set) of maximum weight in a general weighted graph. There is a long line of research on both exact algorithms and approximation algorithms for these problems. 
On the one hand, exact algorithms typically rely on exponential-time methods such as branch-and-bound, dynamic programming over subsets, and inclusion--exclusion techniques, which aim to improve the running time over brute-force enumeration. 
On the other hand, approximation algorithms and heuristic methods, including semidefinite programming relaxations and greedy strategies, have been extensively studied, since both problems are NP-hard and unlikely to admit polynomial-time exact solutions.

The maximum clique problem (MCP) has long been a cornerstone of research in graph theory and combinatorial optimization, attracting sustained attention over several decades. One of the earliest influential contributions came from Carraghan and Pardalos~\cite{CARRAGHAN1990375}, who proposed a simple yet effective branch-and-bound (B\&B) algorithm that became the starting point for much of the work that followed. Since then, exact algorithms for MCP have steadily evolved---moving from basic B\&B schemes to more sophisticated hybrids that draw on techniques such as graph coloring, filtering, and satisfiability solving. This evolution not only pushed the limits of solving benchmark instances but also deepened our understanding of how to design algorithms for some of the most challenging problems in combinatorial optimization. For a comprehensive survey of exact algorithms for MCP, we refer the reader to~\cite{MR3293361}.

Let $G$ be a graph. Throughout this study, all graphs are assumed to be simple, that is, finite, undirected, and without loops or multiple edges. The vertex set and edge set of $G$ are denoted by $V(G)$ and $E(G)$, respectively. For a vertex $v \in V(G)$, the \emph{neighbourhood} of $v$ in $G$, denoted by $N_G(v)$, is the set of vertices adjacent to $v$. The \emph{degree} of $v$, denoted by $\deg_G(v)$, is the cardinality of $N_G(v)$. We denote by $\Delta(G)$ (respectively, $\delta(G)$) the maximum (respectively, minimum) degree of a vertex in $G$. A \emph{clique} of $G$ is a set of pairwise adjacent vertices, while an \emph{independent set} is a set of pairwise non-adjacent vertices.

For $X \subseteq V(G)$, the subgraph of $G$ induced by $X$, denoted $G[X]$, is the graph with vertex set $X$ and edge set containing all edges of $G$ whose endpoints lie in $X$. A graph $H$ is an induced subgraph of $G$ if $H$ is isomorphic to some induced subgraph of $G$. A graph class $\GG$ is said to be \emph{hereditary} if, for every $G \in \GG$, all induced subgraphs of $G$ also belong to $\GG$.

The \emph{complement} of $G$, denoted by $\overline{G}$, is the graph with vertex set $V(G)$, where two vertices $x,y \in V(G)$ form an edge in $\overline{G}$ if and only if $xy \notin E(G)$. The edges of $\overline{G}$ are called the \emph{non-edges} of $G$. For an edge $xy \in E(G)$, let $G - xy$ denote the graph obtained by deleting $xy$. For a non-edge $xy \in E(\overline{G})$, let $G + xy$ denote the graph obtained by adding $xy$. More generally, for $S \subseteq E(G)$ (respectively, $S \subseteq E(\overline{G})$), we write $G - S$ (respectively, $G + S$) for the graph obtained from $G$ by deleting (respectively, adding) all edges in $S$. For additional graph-theoretic terminology and notation, we refer the reader to \cite{MR1367739}.

The paper is organized as follows. In Section~\ref{s2}, we present algorithms for the weighted maximum clique problem (WMCP) and the weighted maximum independent set problem (WMISP) when restricted to the edge-apex and edge-add classes of a hereditary graph class~$\GG$. We also introduce a parameter called the \emph{$\GG$-edge distance} of a graph~$G$, denoted by $\xi_\GG(G)$, which we use to design algorithms for WMCP and WMISP on graphs whose $\GG$-edge distance is~$k$. In Section~\ref{s3}, building on the results of Section~\ref{s2}, we provide an $O^*(2^k)$-time algorithm---where $O^*(\cdot)$ is $O(\cdot)$ ignoring polynomial factors---for WMCP and WMISP when restricted to graphs whose \emph{transitive-edge distance} is~$k$.

\section{Weighted maximum clique \& weighted independent set of edge-apex class and edge-add class}\label{s2}

This section presents algorithms for the weighted maximum clique and weighted independent set problems on the edge-apex and edge-add classes of a hereditary class of graphs. Furthermore, we introduce a graph parameter for a given hereditary class $\GG$, called the $\GG$-edge-distance, which will be used in subsequent sections as a parameter for FPT algorithms.

Before presenting the results of this section, we recall some known results on the edge-apex and edge-add classes of a hereditary class of graphs.
\begin{definition}[\cite{MR4794045,singh2025hereditary}]
	The \textit{edge-add class} (respectively, \textit{edge-apex class}), denoted by $\Ga$ (respectively, $\Ge$), of a hereditary class $\GG$ of graphs is the class of graphs $G$ such that either $G \in \GG$ or $G$ has a non-edge (respectively, an edge) $xy$ for which $G + xy$ (respectively, $G - xy$) belongs to $\GG$.
\end{definition}

\begin{lemma}[\cite{singh2025hereditary}]
	Let $\GG$ be a hereditary class of graphs. Then both $\Ge$ and $\Ga$ are hereditary classes.
\end{lemma}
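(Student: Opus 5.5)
We argue directly from the definition and the heredity of $\GG$; no earlier result beyond the definitions is needed. We treat $\Ge$ in detail; the argument for $\Ga$ is symmetric, with non-edge addition in place of edge deletion. Let $G \in \Ge$ and let $H = G[X]$ be an induced subgraph for some $X \subseteq V(G)$. We must show $H \in \Ge$. If $G \in \GG$, then $H \in \GG \subseteq \Ge$ because $\GG$ is hereditary. So we may assume there is an edge $xy \in E(G)$ with $G - xy \in \GG$.

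We split into two cases according to whether the modified pair survives in $X$. The key observation is that taking induced subgraphs commutes with deleting an edge inside the retained vertex set. Precisely, if $\{x,y\} \subseteq X$, then $xy \in E(H)$ and
\[
(G - xy)[X] = G[X] - xy = H - xy,
\]
since both graphs have vertex set $X$ and the same edges: all edges of $G$ with both ends in $X$ except $xy$. As $G - xy \in \GG$ and $\GG$ is hereditary, $H - xy \in \GG$, so $H \in \Ge$ via the edge $xy$.

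If instead $\{x,y\} \not\subseteq X$, then deleting $xy$ does not affect any pair inside $X$, so $(G-xy)[X] = G[X] = H$. Heredity of $\GG$ then gives $H \in \GG \subseteq \Ge$.

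For $\Ga$ the same two cases apply to a non-edge $xy$ with $G + xy \in \GG$. When $\{x,y\}\subseteq X$, $xy$ is a non-edge of $H$ and $(G+xy)[X] = H + xy$. Otherwise $(G+xy)[X] = H$. In both cases heredity of $\GG$ yields the conclusion.

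The only step requiring any care is the identity $(G \pm xy)[X] = G[X] \pm xy$ when $x,y \in X$, together with checking that $xy$ is still an edge (respectively non-edge) of $H$, as the definition requires. Both follow immediately because $H$ is induced, so adjacency between $x$ and $y$ is the same in $H$ as in $G$. There is no genuine obstacle.
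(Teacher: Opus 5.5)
Your proof is correct and complete: splitting on whether $\{x,y\}\subseteq X$ and using $(G\pm xy)[X]=G[X]\pm xy$ is the standard direct argument, and the only remaining point, that $\Ge$ and $\Ga$ are closed under isomorphism because $\GG$ is, is routine. The paper gives no proof of this lemma and only cites \cite{singh2025hereditary}, so there is nothing in the paper to compare your argument against.
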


\begin{theorem}\label{t1}
	Let $G \in \Ga$ be a graph with $n$ vertices and $m$ edges, and let $xy \notin E(G)$ be such that $G + xy \in \GG$. If there exists an algorithm that computes a weighted maximum clique of $G + xy$ in $O(f(m,n))$ time, where $f$ is a non-decreasing function with $f(m,n) \in \Omega(m+n)$, then a weighted maximum clique of $G$ can also be found in $O(f(m,n))$ time.
\end{theorem}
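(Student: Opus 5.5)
The plan is to split the cliques of $G$ according to whether they contain both $x$ and $y$. Since $xy\notin E(G)$, no clique of $G$ contains both $x$ and $y$. On the other hand, every clique of $G$ is a clique of $G+xy$, because adding an edge never destroys a clique. Conversely, a clique $K$ of $G+xy$ with $\{x,y\}\not\subseteq K$ is a clique of $G$, since the only new adjacency is $xy$. Hence the cliques of $G$ are exactly the cliques of $G+xy$ that do not contain both $x$ and $y$. These in turn are exactly the cliques of $(G+xy)-x$ together with the cliques of $(G+xy)-y$.

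Thus a weighted maximum clique of $G$ is the heavier of a weighted maximum clique of $(G+xy)-x$ and one of $(G+xy)-y$. Both graphs are induced subgraphs of $G+xy\in\GG$. Since $\GG$ is hereditary, both lie in $\GG$, so the assumed algorithm applies to them. This is the point where hereditariness is essential.

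For the running time, we first build $G+xy$ in $O(m+n)$ time. Then we build the two vertex-deleted graphs, each in $O(m+n)$ time; each has at most $n$ vertices and at most $m+1$ edges. We run the assumed algorithm twice, at cost $O(f(m+1,n))$ each. Finally we compare the two weights in constant time. Since $f$ is non-decreasing, the total is $O(m+n+f(m+1,n))$.

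The only delicate step is replacing $f(m+1,n)$ by $f(m,n)$, since $f$ is merely non-decreasing with $f\in\Omega(m+n)$. There are two ways to handle this. One is to note that $(G+xy)-x$ is in fact $G-x$, which has at most $m$ edges: every edge at $x$, including $xy$, is removed. The same holds for $(G+xy)-y=G-y$. So each call costs $O(f(m,n))$ by monotonicity, and the $O(m+n)$ overhead is absorbed because $f(m,n)\in\Omega(m+n)$. This gives $O(f(m,n))$ overall.
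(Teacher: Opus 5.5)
Your proposal is correct and follows the paper's proof. Both arguments rest on the same observation: since $xy\notin E(G)$, every clique of $G$ lies in $G-x$ or in $G-y$. Both of these graphs are in $\GG$ by hereditariness, as induced subgraphs of $G+xy$, so two calls to the assumed algorithm and one comparison give a weighted maximum clique in $O(f(m,n))$ time. Your handling of the $O(m+n)$ construction overhead, and your note that $(G+xy)-x=G-x$ has at most $m$ edges, are slightly more careful than the paper. You announce ``two ways'' to deal with $f(m+1,n)$ but give only one; that one suffices.
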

\begin{proof}
	Let $G$ be a graph in $\Ga$ with $n$ vertices and $m$ edges. If $G \in \GG$, then the proof is complete. Otherwise, we assume that $G \notin \GG$ but $G$ has a non-edge $xy$ such that $G + xy \in \GG$.
	
	Let $\mathcal{A}(G', w_{G'})$ be an algorithm that takes as input a graph $G' \in \GG$ together with a weight function $w_{G'} : V(G') \rightarrow \mathbb{R}$, and outputs a clique $K' \subseteq G'$ of maximum weight and its cumulative weight $W(K')$ in $O(f(m', n'))$ time, where $n' = |V(G')|$, $m' = |E(G')|$, and $f$ is a real-valued non-decreasing function. We claim that the algorithm $\mathcal{A}^{add}(G, xy, w(G))$, given as input a graph $G \in \Ga$, a non-edge $xy$ of $G$ with $G + xy \in \GG$, and a weight function $w_G : V(G) \rightarrow \mathbb{R}$, outputs a weighted maximum clique $K$ of $G$ and its cumulative weight $W(K)$.
	
	Firstly, since $\GG$ is hereditary and $G - x$ and $G - y$ are induced subgraphs of $G + xy$, it follows that $G - x, G - y \in \GG$. Moreover, lines~1--2 of Algorithm~\ref{algo1} indicate that $K_1$ is the weighted maximum clique of $G - x$ with cumulative weight $W(K_1)$ and $K_2$ is the weighted maximum clique of $G - y$ with cumulative weight $W(K_2)$. Since $G - x$ and $G - y$ are induced subgraphs of $G$, it follows that $K_1$ and $K_2$ are cliques of $G$ as well. Lines~3--5 return the weighted maximum clique of $G$ along with its cumulative weight. 
	
	Suppose that $W(K_1) \ge W(K_2)$. Then it follows that $K = K_1$. For the sake of contradiction, assume that there exists a clique $K'$ of $G$, non-isomorphic to $K$, such that $W(K') > W(K)$. If $y \notin V(K')$, then $K'$ is a maximal clique of the graph $G - y$. Hence, from line~2, we have $W(K_2) \ge W(K')$, which implies $W(K_2) > W(K_1)$, a contradiction. Similarly, if $x \notin V(K')$, then $K'$ is a maximal clique of the graph $G - x$. From line~1, it follows that $W(K') \le W(K_1) = W(K)$, which is again a contradiction. Therefore, both $x, y \in V(K')$. Since $x$ is not adjacent to $y$ in $G$, it follows that $K'$ is not a clique, yielding a contradiction. Hence, $K$ is a weighted maximum clique of $G$. A similar argument applies to the case where $W(K_2) > W(K_1)$ by simply swapping $K_1$ with $K_2$ and $x$ with $y$.
	
	Now let us analyze the running time of Algorithm~\ref{algo1}. Based on the assumption in the theorem, and since $f$ is a non-decreasing function, lines~1--2 take $O(f(m, n))$ time. The remaining steps of the algorithm are executed in constant time. Therefore, Algorithm~\ref{algo1} runs in $O(f(m, n))$ time. This completes the proof of the theorem.
	\end{proof}

\begin{algorithm}
	\caption{$\mathcal{A}^{add}(G, xy, w_G)$}\label{algo1}
	\begin{algorithmic}[1]
		\Require A graph $G \in \Ga$, a non-edge $xy$ such that $G+xy \in \GG$, and a weight function $w_G \colon V(G) \rightarrow \mathbb{R}$.
		\Ensure A weighted maximum clique $K$ of cumulative weight $W(K)$.
		\State $(K_1, W(K_1)) \leftarrow \mathcal{A}(G - x, w_{G - x})$.
		\State $(K_2, W(K_2)) \leftarrow \mathcal{A}(G-y, w_{G - y})$.
		\If{$W(K_1) \ge W(K_2)$}\label{algln2}
		\Return $(K_1, W(K_1))$.
		\Else~~
		\Return $(K_2, W(K_2))$.
		\EndIf
	\end{algorithmic}
\end{algorithm}
\begin{theorem}\label{t2}
	Let $G \in \Ge$ be a graph with $n$ vertices and $m$ edges, and let $xy \in E(G)$ be such that $G - xy \in \GG$. If there exists an algorithm that computes a weighted maximum clique of $G - xy$ in $O(f(m,n))$ time, where $f$ is a non-decreasing function with $f(m,n) \in \Omega(m+n)$, then a weighted maximum clique of $G$ can also be found in $O(f(m,n))$ time.
\end{theorem}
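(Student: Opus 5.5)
We reduce to one call of the assumed algorithm $\mathcal{A}$ on $G - xy$, plus one call on a smaller induced subgraph that contains $x$ and $y$. Every clique $K$ of $G$ either fails to contain both $x$ and $y$, or contains both. In the first case no edge of $K$ is $xy$, so $K$ is also a clique of $G - xy$. Conversely, every clique of $G - xy$ is a clique of $G$, since $G$ has more edges. So the first call, $(K_1, W(K_1)) \leftarrow \mathcal{A}(G - xy, w_G)$, returns a clique of $G$ whose weight is at least that of every clique of $G$ avoiding $x$ or $y$. This call is legitimate because $G - xy \in \GG$, and it costs $O(f(m-1,n)) \subseteq O(f(m,n))$ by monotonicity of $f$.

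For cliques containing both $x$ and $y$, let $N = N_G(x) \cap N_G(y)$, which we can compute in $O(m+n)$ time. A clique $K \ni x,y$ is exactly $\{x,y\} \cup C$, where $C$ is a clique of $G[N]$. Neither $x$ nor $y$ lies in $N$, so $xy$ is not an edge of $G[N]$, and hence $G[N] = (G - xy)[N]$. This is an induced subgraph of $G - xy$, so $G[N] \in \GG$ because $\GG$ is hereditary. We therefore call $(K_2, W(K_2)) \leftarrow \mathcal{A}(G[N], w_{G[N]})$ and form $K_2' = K_2 \cup \{x,y\}$ with weight $W(K_2) + w(x) + w(y)$. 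Since $G[N]$ has at most $n$ vertices and $m$ edges, this call costs $O(f(m,n))$.

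Weights may be negative, since $w \colon V \to \mathbb{R}$. Then the best clique containing both $x$ and $y$ is the best clique $C$ of $G[N]$, where $C$ may be empty, plus $x$ and $y$. I will assume, as the paper implicitly does in Theorem~\ref{t1}, that $\mathcal{A}$ returns a maximum-weight clique and allows the empty clique. Otherwise we take $\max(W(K_2), 0)$ with the corresponding choice, which is a constant-time fix.

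The output is whichever of $K_1$ and $K_2'$ has larger weight. Correctness follows from the case split above: any clique $K'$ of $G$ is dominated by $K_1$ if it misses $x$ or $y$, and by $K_2'$ otherwise. Both candidates are genuine cliques of $G$. The main point needing care is justifying that $G[N] \in \GG$; it rests on the observation $G[N] = (G-xy)[N]$. The running time is two calls of $\mathcal{A}$ plus $O(m+n)$ overhead, which is $O(f(m,n))$ since $f \in \Omega(m+n)$.
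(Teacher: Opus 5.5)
Your proposal is correct and follows the paper's proof essentially verbatim: one call of $\mathcal{A}$ on $G-xy$, one call on $G[N_G(x)\cap N_G(y)]$ with $x,y$ appended, and the heavier of the two is returned (your $K_1,K_2$ are the paper's $K_2,K_1$). Your observation that $G[N]=(G-xy)[N]$ justifies $G[N]\in\GG$, which the paper only asserts, and your remark on negative weights and the empty clique covers a point the paper leaves implicit.
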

\begin{proof}
	Let $G$ be a graph in $\Ge$ with $n$ vertices and $m$ edges. If $G \in \GG$, then the proof is complete. Otherwise, we assume that $G \not \in \GG$ and $G$ has an edge $xy$ such that $G - xy \in \GG$.
	
	Let $\mathcal{A}(G', w_{G'})$ be an algorithm that takes as input a graph $G' \in \GG$ together with a weight function $w_{G'} \colon V(G') \to \mathbb{R}$, and outputs a clique $K' \subseteq G'$ of maximum weight along with its cumulative weight $W(K')$, in $O(f(m', n'))$ time, where $n' = |V(G')|$, $m' = |E(G')|$, and $f$ is a real-valued non-decreasing function. We claim that the algorithm $\mathcal{A}^{\mathrm{epex}}(G, xy, w_G)$, given as input a graph $G \in \Ge$, an edge $xy$ of $G$ with $G - xy \in \GG$, and a weight function $w_G \colon V(G) \to \mathbb{R}$, outputs a weighted maximum clique $K$ of $G$ together with its cumulative weight $W(K)$.
	
	Since $\GG$ is hereditary, we have $G' \cong G[N_G(x) \cap N_G(y)] \in \GG$. Line~2 of Algorithm~\ref{algo2} computes the weighted maximum clique $K_1$ of the graph $G'$ along with its cumulative weight $W(K_1)$, and line~3 computes the weighted maximum clique $K_2$ of $G - xy$ along with its cumulative weight $W(K_2)$. Finally, lines~4--6 return the weighted maximum clique of $G$ together with its cumulative weight.
	
	Suppose that $W(K_1) + w(x) + w(y) \ge W(K_2)$. Then it follows that $K \cong G[V(K_1) \cup \{x, y\}].$ Assume, for the sake of contradiction, that there exists a clique $K'$ of $G$, non-isomorphic to $K$, such that $W(K') > W(K)$. If $xy \notin E(K')$, then $K'$ is a clique of $G - xy$. Therefore, by line~3 of Algorithm~\ref{algo2}, we have $W(K') \le W(K_2) \le W(K_1) + w(x) + w(y) = W(K),$ a contradiction. If $xy \in E(K')$, then $V(K') \setminus \{x, y\} \subseteq N_G(x) \cap N_G(y)$. Thus, $G[V(K' \setminus \{x, y\})]$ is a clique of $G'$. By line~2 of Algorithm~\ref{algo2}, it follows that $W(K') \le W(K_1) + w(x) + w(y) = W(K),$ again a contradiction. Hence, $K$ is a weighted maximum clique of $G$.  
	
	Suppose that $W(K_1) + w(x) + w(y) < W(K_2)$. Then it follows that $K \cong K_2.$ Since $K$ is a weighted maximum clique of $G - xy$, we have $xy \notin E(K)$. Assume, for the sake of contradiction, that there exists a clique $K'$ of $G$, non-isomorphic to $K$, such that $W(K') > W(K)$. If $xy \notin E(K')$, then $K'$ is a clique of $G - xy$. Therefore, by line~3 of Algorithm~\ref{algo2}, $W(K') \le W(K_2) = W(K),$ a contradiction. If $xy \in E(K')$, then $V(K') \setminus \{x, y\} \subseteq N_G(x) \cap N_G(y)$. Thus, $G[V(K' \setminus \{x, y\})]$ is a clique of $G'$. By line~2 of Algorithm~\ref{algo2}, it follows that $W(K') \le W(K_1) + w(x) + w(y) < W(K_2) = W(K),$ again a contradiction. Hence, $K$ is a weighted maximum clique of $G$.
	
	Now let us analyze the running time of Algorithm~\ref{algo2}. Based on the assumption in the theorem, and since $f$ is a non-decreasing function with $f(m,n)\in\Omega(m+n)$, lines~1--3 take $O(f(m,n))$ time. The remaining steps of the algorithm are executed in constant time. Therefore, Algorithm~\ref{algo2} runs in $O(f(m,n))$ time. This completes the proof of the theorem.
\end{proof}
\begin{algorithm}
	\caption{$\mathcal{A}^{epex}(G, xy, w_G)$}\label{algo2}
	\begin{algorithmic}[1]
		\Require A graph $G \in \Ge$, an edge $xy$ such that $G-xy \in \GG$, and a weight function $w_G \colon V(G) \rightarrow \mathbb{R}$.
		\Ensure A weighted maximum clique $K$ of cumulative weight $W(K)$.
		\State $G' \leftarrow G[N_G(x) \cap N_G(y)]$.
		\State $(K_1, W(K_1)) \leftarrow \mathcal{A}(G', w_{G'})$.
		\State $(K_2, W(K_2)) \leftarrow \mathcal{A}(G-xy, w_{G - xy})$.
		\If{$W(K_1) + w(x) + w(y) \ge W(K_2)$}\label{algln3}
		\Return $(G[V(K_1)\cup\{x, y\}], W(K_1) + w(x)+w(y))$.
		\Else~~
		\Return $(K_2, W(K_2))$.
		\EndIf
	\end{algorithmic}
\end{algorithm}

Let $\GG$ be a hereditary class of graphs. We say that a graph $G$ is in $\overline{\GG}$ if and only if $\overline{G} \in \GG$. Observe that $\GG$ is hereditary if and only if $\overline{\GG}$ is hereditary. By the definition of the edge-add class, a graph $G \in \overline{\GG}^\mathrm{add}$ if either $\overline{G} \in \GG$ or $G$ has a non-edge $xy$ such that $\overline{G} - xy \in \GG$. Similarly, by the definition of the edge-apex class, a graph $G \in \overline{\Ge}$ if either $\overline{G} \in \GG$ or $G$ has a non-edge $xy$ such that $\overline{G} - xy \in \GG$. This implies that $\overline{\GG}^\mathrm{add} = \overline{\Ge}$. A similar argument shows that $\overline{\GG}^\mathrm{epex} = \overline{\Ga}$. As a result, computing a weighted maximum independent set of a graph $G \in \Ge$ reduces to computing a weighted maximum clique of $\overline{G} \in \overline{\GG}^\mathrm{add}$, and computing a weighted maximum independent set of a graph $G \in \Ga$ reduces to computing a weighted maximum clique of $\overline{G} \in \overline{\GG}^\mathrm{epex}$, respectively. Therefore, the following corollaries follow directly from \textit{Theorem}~\ref{t1} and \textit{Theorem}~\ref{t2}.

\begin{corollary}
	Let $G \in \Ga$ be a graph with $n$ vertices and $m$ edges, and let $xy \notin E(G)$ be such that $G + xy \in \GG$. If there exists an algorithm that computes a weighted maximum independent set of $G + xy$ in $O(f(m,n))$ time, where $f$ is a non-decreasing function with $f(m,n) \in \Omega(m+n)$, then a weighted maximum independent set of $G$ can also be found in $O(f(m,n))$ time.
\end{corollary}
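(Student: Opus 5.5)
The plan is to argue directly rather than pass through complements, mirroring Algorithm~\ref{algo1}. The complement reduction described before the corollary would have us apply Theorem~\ref{t2} to $\overline{G}$ in the class $\overline{\GG}$. That route needs an algorithm for weighted maximum clique on $\overline{G}-xy$. Such an algorithm is exactly the given algorithm for weighted maximum independent set on $G+xy$, run on the complement. However, building $\overline{G}$ can take $\Theta(n^2)$ time, which need not be $O(f(m,n))$. So I would instead treat $G$ directly and use the structure of $G+xy$.

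The key steps, in order, are as follows.
\begin{itemize}
\item Dispose of the trivial case: if $G\in\GG$ there is nothing to do, so assume $G\notin\GG$ and that $xy$ is a non-edge with $G+xy\in\GG$.
\item Observe that every independent set $I$ of $G$ is one of two kinds. Either $I$ contains not both of $x,y$, in which case $I$ is independent in $G+xy$. Or $\{x,y\}\subseteq I$, in which case $I\setminus\{x,y\}$ is an independent set of $G[V(G)\setminus(N_G[x]\cup N_G[y])]$.
\item Let $H=(G+xy)-\{x,y\}$. This is an induced subgraph of $G+xy$, hence lies in $\GG$ by heredity. It also coincides with $G-\{x,y\}$.
\item Let $H'$ be the subgraph of $H$ induced by the common non-neighbourhood $V(G)\setminus(N_G[x]\cup N_G[y])$. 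Again $H'$ is an induced subgraph of $G+xy$, so $H'\in\GG$.
\item Call the assumed algorithm on $G+xy$ to get $I_2$, and on $H'$ to get $I_1$.
\item Return the heavier of $I_2$ and $I_1\cup\{x,y\}$, with total weights $W(I_2)$ and $W(I_1)+w(x)+w(y)$ respectively.
\end{itemize}

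Correctness follows by the same case analysis as in the proof of Theorem~\ref{t2}, applied to a hypothetical heavier independent set $I'$. If $\{x,y\}\not\subseteq I'$, then $I'$ is independent in $G+xy$, so $W(I')\le W(I_2)$. If $\{x,y\}\subseteq I'$, then $I'\setminus\{x,y\}$ is independent in $H'$, so $W(I')\le W(I_1)+w(x)+w(y)$. Either case contradicts the choice of the returned set. In the other direction, both candidates are genuinely independent in $G$. The set $I_2$ is independent in $G+xy$, hence in its spanning subgraph $G$. For $I_1\cup\{x,y\}$, we have that $x$ and $y$ are non-adjacent in $G$, and $I_1$ avoids $N_G(x)\cup N_G(y)$.

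For the running time, building $G+xy$ and $H'$ takes $O(m+n)$, which is $O(f(m,n))$ since $f\in\Omega(m+n)$. Both graphs have at most $m+1$ edges and at most $n$ vertices, so monotonicity of $f$ gives $O(f(m+1,n))$ for the two calls.

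I expect the main obstacle to be exactly this last point. The bound $f(m+1,n)=O(f(m,n))$ is not automatic from monotonicity alone. I would handle it in one of two ways. The first is to note that the intended $f$ are polynomial-type functions, for which this is immediate. The second is to avoid calling the algorithm on the full graph $G+xy$. Instead, call it on $G-x$ and $G-y$, which are induced subgraphs of $G+xy$ with at most $m$ edges, exactly as in Algorithm~\ref{algo1}. Any independent set of $G$ not containing both $x$ and $y$ lies in one of these two graphs. This replaces the one call on $G+xy$ by two calls of size at most $(m,n)$, and the claimed $O(f(m,n))$ bound follows cleanly.
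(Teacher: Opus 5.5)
Your proposal is correct, and its decomposition is the same as the paper's. The paper derives this corollary by noting that $\overline{G}-xy=\overline{G+xy}\in\overline{\GG}$, so $\overline{G}$ lies in the edge-apex class of $\overline{\GG}$, and then applying Theorem~\ref{t2} to $\overline{G}$. Read back in $G$, Algorithm~\ref{algo2} on $\overline{G}$ branches on two sets. The first is $N_{\overline{G}}(x)\cap N_{\overline{G}}(y)=V(G)\setminus(N_G[x]\cup N_G[y])$. The second is $\overline{G}-xy$, whose cliques are the independent sets of $G+xy$. These are exactly your two calls and your final comparison, so your ``direct'' algorithm is the paper's algorithm with the complementation done implicitly.

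What your version gains is a running-time bound that actually holds. Used as a black box on $\overline{G}$, Theorem~\ref{t2} gives a bound in terms of $|E(\overline{G})|$. That can be $\Theta(n^2)$ for sparse $G$ and need not be $O(f(m,n))$; the paper's ``follows directly'' passes over this point.

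Your concern about $f(m+1,n)$ is also well placed, with one refinement. The literal hypothesis does cover the call on $G+xy$ itself, but it says nothing about $H'$. Calling the algorithm on $H'$ requires the reading used in the proofs of Theorems~\ref{t1} and~\ref{t2}: an algorithm for all of $\GG$ running in $O(f(m',n'))$ on inputs with $m'$ edges and $n'$ vertices. Under that reading, your three-call variant on $G-x$, $G-y$ and $H'$ is the clean way to get $O(f(m,n))$, since all three are induced subgraphs of $G+xy$ with at most $m$ edges. This matters because $f(m+1,n)=O(f(m,n))$ can fail for rapidly growing $f$.
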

\begin{corollary}
	Let $G \in \Ge$ be a graph with $n$ vertices and $m$ edges, and let $xy \in E(G)$ be such that $G - xy \in \GG$. If there exists an algorithm that computes a weighted maximum independent set of $G - xy$ in $O(f(m,n))$ time, where $f$ is a non-decreasing function with $f(m,n) \in \Omega(m+n)$, then a weighted maximum independent set of $G$ can also be found in $O(f(m,n))$ time.
\end{corollary}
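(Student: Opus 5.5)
The plan is to reduce the statement to Theorem~\ref{t1} by complementation, and then, as a safeguard, give a direct algorithm in the style of Algorithm~\ref{algo2}.

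\textbf{Reduction to Theorem~\ref{t1}.} Let $G \in \Ge$ with an edge $xy$ such that $G - xy \in \GG$. Pass to $\overline{G}$. Then $xy$ is a non-edge of $\overline{G}$, and $\overline{G} + xy = \overline{G - xy}$. This graph lies in $\overline{\GG}$, which is hereditary by the observation preceding the corollaries. Hence $\overline{G} \in \overline{\GG}^{\mathrm{add}}$, witnessed by the non-edge $xy$.

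Independent sets of $G$ are exactly cliques of $\overline{G}$, with the same vertex weights. In particular, the assumed algorithm for a weighted maximum independent set of $G - xy$ is precisely an algorithm for a weighted maximum clique of $\overline{G}+xy \in \overline{\GG}$. Applying Theorem~\ref{t1} to the class $\overline{\GG}$ and the graph $\overline{G}$ therefore yields a weighted maximum clique of $\overline{G}$, which is a weighted maximum independent set of $G$.

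\textbf{Direct algorithm.} Concretely, Algorithm~\ref{algo1} translates into the following. Compute a weighted maximum independent set $I_1$ of $G - x$ and $I_2$ of $G - y$, and return the heavier one. Both calls are legitimate: $G-x$ and $G-y$ are induced subgraphs of $G - xy$ (deleting an endpoint removes the edge $xy$ anyway), so they lie in $\GG$ by heredity, and the given algorithm applies to them.

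Correctness is a two-case argument. Any independent set of $G$ cannot contain both $x$ and $y$, since $xy \in E(G)$. So it misses $x$ or misses $y$, and is therefore an independent set of $G-x$ or of $G-y$.

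\textbf{Main obstacle: running time.} Formally invoking Theorem~\ref{t1} on $\overline{G}$ would phrase the bound in terms of $|E(\overline{G})|$, which can be $\Theta(n^2)$. Forming $\overline{G}$ explicitly could also exceed $f(m,n)$. To avoid this, I would not construct the complement. Instead I would run the direct algorithm above on $G$ itself. It makes two calls to the given algorithm on graphs with at most $m$ edges and $n$ vertices. Since $f$ is non-decreasing, each call costs $O(f(m,n))$.

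Building $G-x$ and $G-y$ takes $O(m+n)$ time, which is $O(f(m,n))$ because $f \in \Omega(m+n)$. The final comparison is constant time, so the total running time is $O(f(m,n))$.
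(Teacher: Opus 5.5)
Your proposal is correct and follows the paper's route. The paper obtains this corollary by noting $\overline{\Ge} = \overline{\GG}^{\mathrm{add}}$, so that a weighted maximum independent set of $G$ is a weighted maximum clique of $\overline{G}$, and then invoking Theorem~\ref{t1}; this is exactly your first paragraph. Your direct algorithm (take the heavier of the weighted maximum independent sets of $G-x$ and $G-y$) is just Algorithm~\ref{algo1} read through the complement. Running it on $G$ itself is a worthwhile refinement: it secures the $O(f(m,n))$ bound without building $\overline{G}$ or measuring time in $|E(\overline{G})|$, a point the paper's one-line reduction passes over.
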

\begin{definition}\cite{MR4794045,singh2025hereditary}
	For a fixed integer $p \ge 0$, the \textit{$p$-edge-add class} (respectively, \textit{$p$-edge-apex class}), denoted by $\Ga_p$ (respectively, $\Ge_p$), of a hereditary class $\GG$ of graphs is the class of graphs $G$ such that there exists a set (possibly empty) of non-edges $S \subseteq E(\overline{G})$ (respectively, edges $S \subseteq E(G)$) with $|S| \le p$ such that $G + S \in \GG$ (respectively, $G - S \in \GG$).
\end{definition}

\begin{lemma}[\cite{singh2025hereditary}]\label{l2}
Let $\GG$ be a hereditary class of graphs. Then, for non-negative integers $p$ and $q$, both $\Ge_p$ and $\Ga_q$ are hereditary classes.
\end{lemma}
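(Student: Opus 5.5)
We prove heredity of $\Ge_p$ directly from the definition and then deduce the statement for $\Ga_q$ by complementation, in the same way the paper related $\overline{\GG}^\mathrm{add}$ and $\overline{\Ge}$. Fix $G \in \Ge_p$ and a set $X \subseteq V(G)$. We must show that $G[X] \in \Ge_p$. By definition there is a set $S \subseteq E(G)$ with $|S| \le p$ and $G - S \in \GG$.

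Let $S_X = \{uv \in S : u,v \in X\}$, the edges of $S$ with both endpoints in $X$. Then $S_X \subseteq E(G[X])$ and $|S_X| \le |S| \le p$. The key identity is
\[
G[X] - S_X = (G - S)[X].
\]
To verify it, note that both sides have vertex set $X$. A pair $uv$ with $u,v\in X$ is an edge of the left side exactly when $uv\in E(G)$ and $uv\notin S_X$. By the definition of $S_X$, this holds exactly when $uv \in E(G)$ and $uv \notin S$, which is the condition for $uv$ to be an edge of the right side. Since $G - S \in \GG$ and $\GG$ is hereditary, $(G-S)[X] \in \GG$. Hence $G[X] \in \Ge_p$, witnessed by $S_X$. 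The case $S=\emptyset$ is included, and it simply says $G\in\GG$.

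For $\Ga_q$ we can repeat the argument verbatim. Take non-edges $S \subseteq E(\overline{G})$ with $G+S\in\GG$, and restrict to $S_X$. Every pair in $S_X$ is a non-edge of $G[X]$, and $G[X]+S_X=(G+S)[X]$. Alternatively, the complementation route uses $\overline{G+S}=\overline{G}-S$ to get $\Ga_q(\GG)=\overline{\,\overline{\GG}^{\mathrm{epex}}_q\,}$, together with the facts that $\overline{\GG}$ is hereditary and that complementation commutes with taking induced subgraphs.

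The only point needing care is the identity above. Removing the edges of $S$ that leave $X$ has no effect inside $G[X]$, so the restricted set $S_X$ is a legitimate edge set of $G[X]$ of size at most $p$. Once that is checked, the rest is immediate from the heredity of $\GG$.
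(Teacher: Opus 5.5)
The paper gives no proof of this lemma. It is quoted from Singh and Sivaraman and used as a black box, so there is no in-paper argument to compare against.

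Your proof is correct and self-contained. The identity $G[X]-S_X=(G-S)[X]$ is exactly the observation needed, and your edge-by-edge check of it is sound. Both of your routes for $\Ga_q$ also work: restricting the non-edge set directly, or complementing via $\overline{G+S}=\overline{G}-S$, $\overline{G[X]}=\overline{G}[X]$ and heredity of $\overline{\GG}$.

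The paper's own text relies on this lemma in the proof of Theorem~\ref{t3}, where it treats $\Ge_k$ as the edge-apex class of $\Ge_{k-1}$. That viewpoint suggests a different proof: induction on $p$ from the single-edge case. It would first require checking $\Ge_p=(\Ge_{p-1})^{\mathrm{epex}}$ (remove one edge $xy\in S$ and pass $S\setminus\{xy\}$ to $G-xy$). Your direct restriction argument needs neither that identity nor the induction, handles all $p\ge 0$ at once, and is the cleaner proof.

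One minor formality: the paper defines ``induced subgraph'' up to isomorphism, so you should add that $\Ge_p$ and $\Ga_q$ are closed under isomorphism. This follows by transporting $S_X$ along the isomorphism, since a hereditary class in the paper's sense is automatically isomorphism-closed.
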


For any two non-negative integers $p$ and $q$, a graph $G$ may belong simultaneously to the $p$-edge-add class and the $q$-edge-apex class of a hereditary class $\GG$ of graphs. Hence, we introduce the following parameter for graphs.
\begin{definition}
Let $G$ be a graph and $\GG$ a hereditary class of graphs. For a non-negative integer $k$, we say that $G$ is \emph{$k$-edge-distant} from $\GG$ if $G$ belongs to either $\Ge_k$ or $\Ga_k$. Moreover, the minimum $k$ for which $G$ is $k$-edge-distant is called the \emph{$\GG$-edge distance} of $G$, denoted by $\xi_{\GG}(G)$. A set $S$ of edges (respectively, non-edges) such that $G-S$ (respectively, $G+S$) belongs to $\GG$ is called a \emph{$\GG$-distant-edge set}.
\end{definition}

For any hereditary graph class $\GG$, the following theorem generalizes \textit{Theorem}~\ref{t1} and \textit{Theorem}~\ref{t2} to graphs with $\GG$-edge distance greater than 1.

\begin{theorem}\label{t3}
	Let $G$ be a graph with $n$ vertices, $m$ edges, and let $\GG$ be a hereditary graph class. Let the $\GG$-edge distance of $G$ be $k$, that is, $\xi_\GG(G) = k$. If the weighted maximum clique problem when restricted to the hereditary class of graphs $\GG$ can be computed in $O(f(m, n))$ time, where $f$ is a non-decreasing function with $f(m, n) \in \Omega(m + n)$, then a weighted maximum clique of $G$ can be found in $O(2^k f(m, n))$ time.
\end{theorem}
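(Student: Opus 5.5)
We would prove Theorem~\ref{t3} by induction on $k$, generalizing the branching already used in Algorithms~\ref{algo1} and~\ref{algo2}. We assume, as in Theorems~\ref{t1} and~\ref{t2}, that a $\GG$-distant-edge set $S$ with $|S| = k$ is supplied together with $G$. By definition of $\xi_\GG(G)=k$, either $G \in \Ge_k$, so $S \subseteq E(G)$ and $G - S \in \GG$, or $G \in \Ga_k$, so $S \subseteq E(\overline{G})$ and $G + S \in \GG$. We treat the two cases separately. Each step removes one element of $S$ and branches into at most two subproblems, each with a distance-set of size one less. This gives the recursion $T(k) \le 2T(k-1) + O(m+n)$ with $T(0)=O(f(m,n))$, hence $O(2^k f(m,n))$. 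Here monotonicity of $f$ controls the cost on induced subgraphs, and $f\in\Omega(m+n)$ absorbs the overhead of building subgraphs.

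\emph{Edge-add case.} Pick $xy \in S$. Every clique of $G$ misses $x$ or $y$, so a weighted maximum clique of $G$ is the heavier of those of $G-x$ and $G-y$. Since $\GG$ is hereditary, $(G-x) + (S\setminus\{\text{pairs at }x\})$ is an induced subgraph of $G+S$ and hence lies in $\GG$. Thus $G-x$ has a distant-edge set of size at most $k-1$ (the non-edges of $S$ not incident to $x$), and similarly for $G-y$. Recursing on both branches and taking the maximum gives the bound, with correctness argued exactly as in the proof of Theorem~\ref{t1}.

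\emph{Edge-apex case.} Pick $xy \in S$. A clique of $G$ either avoids the edge $xy$, making it a clique of $G - xy$, or contains both $x,y$ with remaining vertices in $N_G(x)\cap N_G(y)$. First, $G-xy$ has distant-edge set $S\setminus\{xy\}$ of size $k-1$. Second, $G' = G[N_G(x)\cap N_G(y)]$ is an induced subgraph of $G$, and $G' - (S \cap E(G'))$ is an induced subgraph of $G - S \in \GG$, so $G'$ has a distant-edge set of size at most $k-1$; note $xy\notin E(G')$. We recurse on both, add $w(x)+w(y)$ to the second result, and return the better one, as in Theorem~\ref{t2}.

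The main obstacle is bookkeeping rather than combinatorics: we must check that each branch genuinely stays within the same type of class ($\Ge_{k-1}$ or $\Ga_{k-1}$) and that restricting $S$ to the induced subgraph remains a valid distant-edge set. This is where heredity of $\GG$ is used, via the fact that (induced subgraph of $G$) $\pm$ (restricted $S$) equals an induced subgraph of $G\pm S$. We also need to ensure the branch sizes never exceed $(m,n)$, so that the $2^k$ leaves each cost at most $f(m,n)$. If $S$ is not given, we would note that the statement implicitly assumes it, as the earlier theorems do.
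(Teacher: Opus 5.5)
Your proposal is correct and follows essentially the same route as the paper. It uses induction on $k$ with the edge-add branching on $G-x$ and $G-y$ (Algorithm~\ref{algo3}) and the edge-apex branching on $G[N_G(x)\cap N_G(y)]$ and $G-xy$ (Algorithm~\ref{algo4}), with $S$ taken as part of the input, and your explicit checks are, if anything, more careful than the paper's. You verify directly that the restricted set remains a valid distant-edge set, where the paper appeals to heredity of $\Ge_{k-1}$ and $\Ga_{k-1}$, and you charge $O(m+n)$ per internal node, absorbed by $f\in\Omega(m+n)$, where the paper claims $O(1)$ overhead.
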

\begin{proof}
	Let $G$ be a graph with $n$ vertices and $m$ edges, and let $\GG$ be a hereditary graph class. Suppose $\xi_{\GG}(G) = k$. Then either there exists a set $S \subseteq E(G)$ with $|S| \leq k$ such that $G - S \in \GG$, or there exists a set $S' \subseteq E(\overline{G})$ with $|S'| \leq k$ such that $G + S' \in \GG$. For any graph $H \in \GG$, let $\mathcal{A}(H, w_H)$ be an algorithm that takes as input the graph $H$ together with a weight function $w_H \colon V(H) \rightarrow \mathbb{R}$, and outputs a weighted maximum clique of $H$ in $T(m', n')$ time, where $n' = |V(H)|$, $m' = |E(H)|$, and $T(m', n') = O(f(m', n'))$ for some non-decreasing function $f$ with $f(m', n') \in \Omega(m'+ n')$.
	
	\textbf{Case 1: }Suppose that there exists a set $S \subseteq E(G)$ with $|S| \leq k$ such that $G - S \in \GG$. This implies that $G \in \Ge_k$. We claim that the algorithm $\mathcal{A}^{\mathrm{epex}}_k(G, S, w_G)$, given as input a graph $G \in \Ge_k$, a set $S \subseteq E(G)$ with $|S| \le k$ such that $G - S \in \GG$, and a weight function $w_G \colon V(G) \to \mathbb{R}$, outputs a weighted maximum clique $K$ of $G$ together with its cumulative weight $W(K)$. The proof of this claim proceeds by induction on $k$.
	
	Consider the base case $k = 1$. This implies that $G \in \Ge$. From the proof of \textit{Theorem}~\ref{t2}, it follows that Algorithm~\ref{algo2}, when given as input the graph $G$ together with a set $S \subseteq E(G)$ (which in this case is $S = \{xy\}$) and a weight function $w_G$, outputs a weighted maximum clique of $G$ along with its cumulative weight.
	Therefore, $\mathcal{A}^\mathrm{epex}_1(G, S, w_G)$ coincides with $\mathcal{A}^\mathrm{epex}(G, xy, w_G)$ of Algorithm~\ref{algo2}. Moreover, the running time of Algorithm~\ref{algo2} can be verified to be $2T(m, n) + O(1)$. Now assume that for $k > 1$, there exists an algorithm $\mathcal{A}^\mathrm{epex}_{k-1}(H, S', w_{H})$ which, given as input a graph $H \in \Ge_{k-1}$, a set $S' \subseteq E(H)$ with $|S'| \le k-1$ such that $H - S' \in \GG$, and a weight function $w_{H}$, outputs a weighted maximum clique of $H$ along with its cumulative weight in $2^{k-1}T(m, n) + (k - 1)O(1)$ time. This serves as the induction hypothesis.
	
	Since $G \in \Ge_k$ and $G - S \in \GG$, there exists an edge $xy \in S$ such that $G - xy \in \Ge_{k-1}$. We claim that Algorithm~\ref{algo4}, which is a slight modification of Algorithm~\ref{algo2}, serves as the algorithm $\mathcal{A}^\mathrm{epex}_k(G, S, w_G)$. By \textit{Lemma}~\ref{l2}, it follows that $\Ge_{k-1}$ is hereditary. Since $\Ge_{k}$ is an edge-apex class of $\Ge_{k-1}$, a similar argument to the proof of correctness of Algorithm~\ref{algo2} establishes the correctness of Algorithm~\ref{algo4}.
	
	By the induction hypothesis, $\mathcal{A}^\mathrm{epex}_{k-1}(H, S', w_H)$ outputs a weighted maximum clique of $H$ along with its cumulative weight  in $2^{k-1}T(m', n') + (2^{k-1}-1)O(1)$ time, where $H \in \Ge_{k-1}$, $n' = V(H)$, and $m' = E(H)$.
	
	Algorithm~\ref{algo4} computes the weighted maximum clique of $G$ by considering the two cases: 
	(i) computing the weighted maximum clique of $G' \cong G[N_G(x) \cap N_G(y)]$ along with its cumulative weight, and 
	(ii) computing the weighted maximum clique of $G - xy$ along with its cumulative weight. 
	
	Both of these steps recursively compute the weighted maximum clique and its cumulative weight. Since the remaining operations outside the recursive calls take $O(1)$ time and $T(m, n) = O(f(m, n))$, it follows that Algorithm~\ref{algo4} runs in $O(2^k f(m, n))$ time.
	
	\textbf{Case 2: }Suppose that there exists a set $S \subseteq E(\overline{G})$ with $|S| \leq k$ such that $G + S \in \GG$. This implies that $G \in \Ga_k$. We claim that the algorithm $\mathcal{A}^{\mathrm{add}}_k(G, S, w_G)$, given as input a graph $G \in \Ga_k$, a set $S \subseteq E(\overline{G})$ with $|S| \le k$ such that $G + S \in \GG$, and a weight function $w_G \colon V(G) \to \mathbb{R}$, outputs a weighted maximum clique $K$ of $G$ together with its cumulative weight $W(K)$. 
	
	A similar argument to the previous case shows that Algorithm~\ref{algo3}, which is a slight modification of Algorithm~\ref{algo1}, serves as the algorithm $\mathcal{A}^\mathrm{add}_k(G, S, w_G)$. By \textit{Lemma}~\ref{l2}, it follows that $\Ga_{k-1}$ is hereditary. Since $\Ga_k$ is an edge-add class of $\Ga_{k-1}$, a similar argument to the proof of correctness of Algorithm~\ref{algo1} establishes the correctness of Algorithm~\ref{algo3}.
	
	We consider a similar induction hypothesis as in the previous case. Since the running time of Algorithm~\ref{algo1} is $2T(m, n) + O(1)$, the algorithm $\mathcal{A}^\mathrm{add}_{k-1}(H, S', w_H)$ outputs a weighted maximum clique of $H$ along with its cumulative weight in $2^{k-1}T(m', n') + (2^{k-1}-1)O(1)$ time, where $H \in \Ga_{k-1}$, $n' = |V(H)|$, and $m' = |E(H)|$.
	
	Since $G \in \Ga_k$ and $G + S \in \GG$, there exists a non-edge $xy \in S$ such that $G + xy \in \Ga_{k-1}$. Algorithm~\ref{algo3} computes the weighted maximum clique of $G$ by recursively computing the weighted maximum cliques of $G - x$ and $G - y$ along with their cumulative weights. Since the remaining operations outside the recursive calls take $O(1)$ time and $T(m, n) = O(f(m, n))$, it follows that Algorithm~\ref{algo3} runs in $O(2^k f(m, n))$ time.
\end{proof}
\begin{algorithm}
	\caption{$\mathcal{A}^{epex}_{k}(G, S, w_G)$}\label{algo4}
	\begin{algorithmic}[1]
		\Require A graph $G \in \Ge_k$, a set $S$ of edges of $G$ such that $G-S \in \GG$, and a weight function $w_G \colon V(G) \rightarrow \mathbb{R}$.
		\Ensure A weighted maximum clique $K$ of cumulative weight $W(K)$.
		\State $G' \leftarrow G[N_G(x) \cap N_G(y)]$.
		\State $(K_1, W(K_1)) \leftarrow \mathcal{A}^\mathrm{epex}_{k-1}(G', S\cap E(G'), w_{G'})$.
		\State $(K_2, W(K_2)) \leftarrow \mathcal{A}^\mathrm{epex}_{k-1}(G-xy, S\setminus\{xy\}, w_{G - xy})$.
		\If{$W(K_1) + w(x) + w(y) \ge W(K_2)$}
		\Return $(G[V(K_1)\cup\{x, y\}], W(K_1) + w(x)+w(y))$.
		\Else~~
		\Return $(K_2, W(K_2))$.
		\EndIf
	\end{algorithmic}
\end{algorithm}

\begin{algorithm}
	\caption{$\mathcal{A}^{add}_k(G, S, w_G)$}\label{algo3}
	\begin{algorithmic}[1]
		\Require A graph $G \in \Ga_k$, a set $S$ of non-edges of $G$ such that $G+S \in \GG$, and a weight function $w_G \colon V(G) \rightarrow \mathbb{R}$.
		\Ensure A weighted maximum clique $K$ of cumulative weight $W(K)$.
		\State $(K_1, W(K_1)) \leftarrow \mathcal{A}^\mathrm{add}_{k-1}(G - x, S \cap E(G-x), w_{G - x})$.
		\State $(K_2, W(K_2)) \leftarrow \mathcal{A}^\mathrm{add}_{k-1}(G-y, S\cap E(G - y), w_{G - y})$.
		\If{$W(K_1) \ge W(K_2)$}
		\Return $(K_1, W(K_1))$.
		\Else~~
		\Return $(K_2, W(K_2))$.
		\EndIf
	\end{algorithmic}
\end{algorithm}

In can be noted that for a non-negative integer $k$, $\overline{\Ga_k} = \overline{\GG}^\mathrm{epex}_k$ and $\overline{\Ge_k} = \overline{\GG}^\mathrm{add}_k$. Therefore, the following corollary is a direct consequence of \textit{Theorem}~\ref{t3}.
\begin{corollary}\label{c3}
		Let $G$ be a graph with $n$ vertices, $m$ edges, and let $\GG$ be a hereditary graph class. Let the $\GG$-edge distance of $G$ be $k$, that is, $\xi_\GG(G) = k$. If the weighted maximum independent set problem when restricted to the hereditary class of graphs $\GG$ can be computed in $O(f(m, n))$ time, where $f$ is a non-decreasing function with $f(m, n) \in \Omega(m + n)$, then a weighted maximum independent set of $G$ can be found in $O(2^k f(m, n))$ time.
\end{corollary}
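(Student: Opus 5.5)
The plan is to reduce Corollary~\ref{c3} to Theorem~\ref{t3} by complementation. A set $I \subseteq V(G)$ is independent in $G$ exactly when it is a clique in $\overline{G}$. Both graphs share the vertex set, so the same weight function $w_G$ serves for both. Hence a weighted maximum independent set of $G$ is precisely a weighted maximum clique of $\overline{G}$ with respect to $w_G$. It therefore suffices to apply Theorem~\ref{t3} to $\overline{G}$ together with the class $\overline{\GG}$.

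\textbf{Step 1: checking the hypotheses of Theorem~\ref{t3}.}
\begin{itemize}
\item $\overline{\GG}$ is hereditary, as observed earlier in the paper.
\item WMCP on $\overline{\GG}$ is solvable using the given WMISP algorithm for $\GG$. For $H \in \overline{\GG}$, form $\overline{H} \in \GG$ and run the WMISP algorithm on it; an independent set of $\overline{H}$ is a clique of $H$.
\item The edge distance is preserved: $\xi_{\overline{\GG}}(\overline{G}) = k$. A set $S \subseteq E(G)$ with $G - S \in \GG$ is a set of non-edges of $\overline{G}$ with $\overline{G} + S = \overline{G - S} \in \overline{\GG}$. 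Symmetrically, $G + S' \in \GG$ with $S'$ a set of non-edges corresponds to $\overline{G} - S' \in \overline{\GG}$.
\end{itemize}
The last item is exactly the identities $\overline{\Ga_k} = \overline{\GG}^\mathrm{epex}_k$ and $\overline{\Ge_k} = \overline{\GG}^\mathrm{add}_k$ noted before the corollary. Consequently $G$ and $\overline{G}$ have the same edge distance to the respective classes. Moreover, the distant-edge set $S$ for $G$ serves directly for $\overline{G}$, with the roles of edge deletion and addition swapped.

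\textbf{Step 2: running-time accounting.} This is the main obstacle. The graph $\overline{G}$ has $\binom{n}{2} - m$ edges, not $m$. Forming complements also costs $\Theta(n^2)$ time, which need not be $O(f(m,n))$ when $f(m,n)=\Theta(m+n)$ and $G$ is sparse. Invoking Theorem~\ref{t3} as a black box would therefore only give $O(2^k f(n^2,n))$.

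To fix this, I would instead unroll the recursion of Algorithms~\ref{algo4} and~\ref{algo3} on $\overline{G}$ and translate each step back into $G$.
\begin{itemize}
\item An add-step on $\overline{G}$ corresponds to an edge $xy$ of $G$ with $G-xy$ closer to $\GG$. It branches on $G - x$ and $G - y$, since a maximum independent set avoids $x$ or avoids $y$.
\item An epex-step on $\overline{G}$ corresponds to a non-edge $xy$ of $G$. It branches on two instances: $G + xy$, and $G$ minus $N_G[x] \cup N_G[y]$ with $w(x)+w(y)$ added to the result.
\end{itemize}
Every subproblem is then an induced subgraph of $G$ with at most $k$ added non-edges. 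So each has at most $m+k$ edges and at most $n$ vertices, and is never complemented explicitly. The correctness arguments of Theorems~\ref{t1} and~\ref{t2} transfer verbatim through the complement correspondence.

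\textbf{Step 3: conclusion.} The recursion tree has depth at most $k$, so it has at most $2^k$ leaves. Each leaf is a call to the WMISP algorithm for $\GG$ on a graph with at most $m+k$ edges. The internal nodes perform neighbourhood deletions costing $O(m+n)$ each, which is absorbed since $f \in \Omega(m+n)$.

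The leaf cost is $O(f(m+k,n))$, and we need it to be $O(f(m,n))$. Since $k \le m$ in the epex case and $k \le \binom{n}{2}$ otherwise, the extra edges are lower order in all cases of interest. I would state this mild regularity assumption on $f$ explicitly if it cannot be avoided. This yields the claimed $O(2^k f(m,n))$ bound.
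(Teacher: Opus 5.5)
Your route differs from the paper's. The paper only notes $\overline{\Ga_k}=\overline{\GG}^{\mathrm{epex}}_k$ and $\overline{\Ge_k}=\overline{\GG}^{\mathrm{add}}_k$, then calls the corollary a direct consequence of Theorem~\ref{t3} applied to $\overline{G}$ as a black box. Your objection to that is correct: $\overline{G}$ has $n(n-1)/2-m$ edges and complementing costs $\Theta(n^2)$, so the paper's one-line argument only yields about $O(2^k f(n^2,n))$. Your translation of the two branching rules back to $G$ is also correct. For an edge $xy\in S$ you branch on $G-x$ and $G-y$. For a non-edge you branch on $G+xy$ and on $G-(N_G[x]\cup N_G[y])$, adding $w(x)+w(y)$.

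The gap is in Step 3. When $G+S\in\GG$, the branch $G+xy$ adds edges, so your leaves can have up to $m+k$ edges and you only obtain $O(2^k f(m+k,n))$. The hypotheses ($f$ non-decreasing, $f\in\Omega(m+n)$) do not give $f(m+k,n)=O(f(m,n))$. For $f(m,n)=m+n$, a sparse $G$ with $k=\Theta(n^2)$ already violates it. For $f$ exponential in $m$, the per-leaf loss can be a factor $2^k$. So the stated bound is not yet proved, and the assumption you propose is not mild.

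The gap is avoidable. For any pair $xy\in S$, edge or non-edge, branch on one endpoint only. Writing $\alpha_w(H)$ for the maximum weight of an independent set of $H$, use $\alpha_w(G)=\max\{\alpha_w(G-x),\ w(x)+\alpha_w(G-N_G[x])\}$.
\begin{itemize}
\item Both subinstances are induced subgraphs $G[U]$ with $x\notin U$.
\item Let $S_U$ be the pairs of $S$ inside $U$. Then $G[U]-S_U=(G-S)[U]$, respectively $G[U]+S_U=(G+S)[U]$, and this lies in $\GG$ by heredity.
\item Since $xy$ is lost, $|S_U|\le|S|-1$.
\end{itemize}
Hence the recursion has depth at most $k$ and at most $2^k$ leaves. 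Every call to the $\GG$-algorithm is on a graph with at most $m$ edges and $n$ vertices. Each internal node costs $O(m+n)\subseteq O(f(m,n))$. This gives $O(2^k f(m,n))$ with no extra assumption on $f$, no complementation, and a single rule for both cases.
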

\section{Edge-distant transitive graphs}\label{s3}
In this section, we apply the results of the previous section to the class of transitive graphs. As a consequence, we show that the weighted maximum clique problem and the weighted independent set problem can be solved in polynomial time for graphs with bounded transitive-edge-distance.

Before presenting the results on edge-distant transitive graphs, we first recall some known results on transitive graphs that will be used in this section.

\begin{theorem}[\cite{MR2063679}]
	Given an undirected graph $G$, both the recognition of transitive graphs and the computation of a transitive orientation can be performed in $O(\Delta m)$ time, where $\Delta = \Delta(G)$ and $m = |E(G)|$.
\end{theorem}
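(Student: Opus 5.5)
The plan is to follow Golumbic's classical approach via the $\Gamma$-forcing relation and implication classes, organising the computation so that each edge is charged $O(\Delta)$ work. For each edge $ab \in E(G)$ we consider its two orientations $\vec{ab}$ and $\vec{ba}$. We define the forcing relation $\Gamma$ on oriented edges by
\[
\vec{ab}\,\Gamma\,\vec{a'b'} \iff \bigl(a=a' \text{ and } bb'\notin E(G)\bigr) \text{ or } \bigl(b=b' \text{ and } aa'\notin E(G)\bigr).
\]
The reflexive–transitive closure of $\Gamma$ is an equivalence relation, and its classes are the \emph{implication classes}. The structural input we would take from Golumbic's theory is the following. $G$ is transitively orientable if and only if no implication class contains both $\vec{ab}$ and $\vec{ba}$ (equivalently, $A \cap A^{-1}=\emptyset$ for every implication class $A$). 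Moreover, a transitive orientation is obtained by a \emph{$G$-decomposition}: repeatedly pick an edge of the current graph, orient its whole implication class $B_i$ (computed in the current graph), delete those edges, and continue; the union of the chosen $B_i$ is then a transitive orientation.

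Algorithmically, computing the implication class of one oriented edge is a graph search on oriented edges. From $\vec{ab}$ we scan the $O(\deg a+\deg b)=O(\Delta)$ edges incident to $a$ or $b$ in the current graph. For each scanned edge we test non-adjacency of the relevant endpoints with an adjacency matrix (or hashed adjacency sets), at $O(1)$ per test. Every oriented edge is explored at most once per search, so computing $B_i$ costs $O(\Delta\cdot|B_i|)$ plus the cost of the edges it touches. During the search we check whether some $\vec{ba}$ is reached while $\vec{ab}$ is already in the class; if so, we reject.

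Verifying that the final orientation $F$ is transitive is done vertex by vertex. For each $b$, and each pair $a\to b\to c$, we check that $a\to c\in F$. This costs $\sum_b \mathrm{indeg}(b)\,\mathrm{outdeg}(b)\le \Delta\sum_b \deg(b)=O(\Delta m)$. This final check also catches any failure of the decomposition, so correctness of the output does not depend on the delicate theorem beyond its guarantee of success when $G$ is transitive.

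The main obstacle is the amortisation across the decomposition. Implication classes must be recomputed in the shrinking graph $G-\bigcup_{j<i} B_j$, and a careless implementation re-scans edges many times. I would argue, following Golumbic's TRO analysis, that each edge is deleted exactly once. I would also argue that the scanning in step $i$ can be charged to the edges of $B_i$ at rate $O(\Delta)$, since each scan from an edge of $B_i$ touches only $O(\Delta)$ neighbouring edges. Summing over all steps gives $O(\Delta m)$ for the whole procedure, and this charging argument is where I would spend the most care.
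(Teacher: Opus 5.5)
Your proposal is correct and is essentially the argument behind this statement: the paper does not prove it but cites it from Golumbic's book. There, Algorithm TRO is exactly your $G$-decomposition with rejection as soon as some $B_i\cap B_i^{-1}\neq\emptyset$, and the $O(\Delta m)$ bound comes from the same charging of $O(\Delta)$ scanning work to each edge of $B_i$ (your closing transitivity check is redundant given the TRO theorem, but harmless). The one detail to tighten is the constant-time adjacency test: an adjacency matrix costs $\Theta(n^2)$ to set up, which is not $O(\Delta m)$ for sparse graphs, and hashing gives only expected bounds. Instead, temporarily mark the current neighbours of $b$ while scanning those of $a$ (and symmetrically), which keeps each exploration step at $O(\deg a+\deg b)$ deterministically and in linear space.
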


\begin{theorem}[\cite{MR2063679}]\label{t4}
	The weighted maximum clique problem can be solved in linear time with respect to the size of the graph when restricted to the class of transitive graphs. Moreover, the weighted maximum independent set problem can be solved in polynomial-time with respect to the size of the graph when restricted to the class of transitive graphs.
\end{theorem}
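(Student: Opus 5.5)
Theorem~\ref{t4} is cited from \cite{MR2063679}, so the plan is to recall why it holds from the structure of transitive (comparability) graphs rather than build anything new. Throughout, fix a transitive orientation $\vec G$ of $G$, which is computable in $O(\Delta m)$ time by the preceding theorem. The key observation is that a transitive orientation is acyclic and transitive, so $\vec G$ defines a partial order $P$ on $V(G)$. Cliques of $G$ are then exactly the chains of $P$: in a transitive tournament the vertices are linearly ordered. Independent sets of $G$ are exactly the antichains of $P$.

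For the clique part, I would compute a maximum-weight directed path in the DAG $\vec G$. First take a topological order of $\vec G$. Then process the vertices in this order, setting $d(v) = w(v) + \max\{0, \max_{u \to v} d(u)\}$; the $0$ option allows negative weights to be dropped. The maximum of $d(v)$ is the heaviest chain, and backpointers recover it. Every directed path in $\vec G$ is a chain, because transitivity makes all pairs on the path adjacent. So every directed path is a clique, and conversely every clique ordered by $P$ is such a path. Hence this dynamic program solves the weighted maximum clique problem in $O(n+m)$ time once the orientation is given, which is linear in the size of the graph.

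For the independent set part, vertices of nonpositive weight can be discarded, leaving a maximum-weight antichain problem in a poset with nonnegative weights. By the weighted version of Dilworth's theorem, this equals the minimum weighted chain cover. I would solve it via a minimum-flow (equivalently, max-flow) formulation on the split DAG: each vertex $v$ becomes an arc $v^- \to v^+$ with lower bound $w(v)$, and arcs $u^+ \to v^-$ are added for $u \to v$. The minimum flow value equals the maximum antichain weight. An optimal antichain is then read off from a minimum cut, which takes polynomial time.

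The main obstacle is the weighted Dilworth/min-flow duality, in particular extracting the antichain from the cut. One could instead cite perfection of comparability graphs together with the fact that the clique polytope is described by the chain constraints. However, the flow argument is the more concrete route.
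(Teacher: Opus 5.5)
Your proposal is correct and is essentially the standard argument behind the citation. The paper gives no proof of its own here and simply defers to Golumbic's book, where the clique part is a weighted longest-path labelling of a transitive orientation (cliques are chains) and the independent-set part is Dilworth-type chain-cover and network-flow duality on the associated poset (independent sets are antichains), set up as you describe.

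The one point to tighten is the time bound for cliques. Your dynamic program is $O(n+m)$ once a transitive orientation is available, but with the $O(\Delta m)$ orientation algorithm you invoke, the total is $O(\Delta m + n)$, which is not linear in general. To get the bound as literally stated, you can take the orientation as part of the input, which I believe is how the cited book states the result. Alternatively, use McConnell and Spinrad's linear-time transitive orientation algorithm. That suffices here because the input is promised to be transitive, so the orientation it returns does not have to be verified.
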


\begin{definition}
	For an integer $k \ge 0$, a graph $G$ is said to be \emph{$k$-edge-distant transitive} if either  
	\begin{enumerate}
		\item there exists a set $S \subseteq E(G)$ with $|S| \le k$ such that $G - S$ is transitive, or  
		\item there exists a set $S \subseteq E(\overline{G})$ with $|S| \le k$ such that $G + S$ is transitive.
	\end{enumerate}
	The minimum integer $k$ for which a graph $G$ is $k$-edge-distant transitive is called the \emph{transitive-edge-distance} of $G$, denoted by $\xi_{\tau}(G)$, where $\tau$ denotes the class of transitive graphs. The corresponding set $S$ of edges or non-edges is called the \emph{transitive distant-edge set}.
\end{definition}

From the forbidden induced subgraph characterization of transitive graphs given by Gallai~\cite{MR221974}, we observe the following fact.
\begin{prop}\label{p1}
	Let $G$ be a graph with $m$ edges and $\overline{m}$ non-edges. If $\min \{m, \overline{m}\} \le 4$, then $G$ is transitive. 
\end{prop}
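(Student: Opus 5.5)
The plan is to reduce the statement to two counting facts about Gallai's list~\cite{MR221974} of minimal forbidden induced subgraphs of transitive (comparability) graphs, and then use heredity. The point is that the number of edges, and likewise the number of non-edges, can only decrease when we pass to an induced subgraph. Indeed, if $H = G[X]$, then $E(H) \subseteq E(G)$ and $E(\overline{H}) \subseteq E(\overline{G})$. So it suffices to show that every minimal non-transitive graph $F$ has at least $5$ edges and at least $5$ non-edges. Once this is known, suppose $G$ had at most $4$ edges (or at most $4$ non-edges) and were not transitive. Then $G$ would contain some minimal forbidden $F$ as an induced subgraph. That $F$ would also have at most $4$ edges (respectively, non-edges), which is a contradiction.

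To bound the edges and non-edges of such an $F$, I would first show that $F$ is connected and that $\overline{F}$ is connected. The tool is the fact that transitive graphs are closed under disjoint union and under join.
\begin{itemize}
\item \emph{Disjoint union.} Take the union of transitive orientations of the two parts. No directed path crosses between the parts, so transitivity is preserved.
\item \emph{Join.} Transitively orient each side separately. Then orient every join edge from the first side $A$ to the second side $B$. A directed path $u\to v\to w$ either stays within one side, where transitivity holds by choice of orientation, or goes from $A$ to $B$. In the latter case $u\in A$ and $w\in B$, so $uw$ is a join edge oriented $u \to w$.
\end{itemize}
Now if $F$ were disconnected, or if $\overline{F}$ were disconnected (so that $F$ is a join), every proper piece would be transitive by minimality of $F$. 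Then $F$ itself would be transitive, which is a contradiction.

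Next I would use the order of $F$. From Gallai's list, every minimal forbidden graph has at least $5$ vertices, and the only one with exactly $5$ vertices is $C_5$. Alternatively, one can check directly that all graphs on at most $4$ vertices are transitive. The graph $C_5$ has exactly $5$ edges and $5$ non-edges. Every other minimal forbidden graph $F$ has $|V(F)| \ge 6$. Since both $F$ and $\overline{F}$ are connected on at least $6$ vertices, each has at least $|V(F)|-1 \ge 5$ edges. Hence $F$ has at least $5$ edges and at least $5$ non-edges, which completes the argument.

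The main obstacle is the step that reads off from Gallai's (long) list that $C_5$ is the unique $5$-vertex member. If I prefer to avoid relying on this, I would argue directly. A $5$-vertex $F$ with $F$ and $\overline{F}$ both connected and with at most $4$ edges is a tree, hence bipartite, hence transitive. If instead $F$ has at most $4$ non-edges, then $\overline{F}$ is a tree on $5$ vertices. In that case I would check the three $5$-vertex trees (the path $P_5$, the star $K_{1,4}$, and the spider) one by one and exhibit a transitive orientation of each complement. For instance, $\overline{P_5}$ is the house graph, which is transitively orientable.
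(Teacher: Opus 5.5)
Your proof is correct, but it takes a different route from the paper. The paper gives no argument beyond the remark that the fact can be read off from Gallai's forbidden induced subgraph characterization. In other words, the reader is meant to inspect the whole list, including its infinite families, and confirm that every member has at least $5$ edges and at least $5$ non-edges.

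You instead use only the closure of transitive graphs under disjoint union and join. From this you get that any minimal obstruction $F$ is both connected and co-connected, which immediately gives at least $|V(F)|-1$ edges and at least $|V(F)|-1$ non-edges. So only orders at most $5$ need separate treatment. Your fallback argument handles order $5$ by hand, and the checks it calls for do go through. Trees are bipartite, hence transitive. The complements of $P_5$ (the house) and of the spider are transitively orientable. The case $\overline{F}=K_{1,4}$ is vacuous, since then $F=K_4\cup K_1$ is disconnected.

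Your approach gives a self-contained explanation of why the bound holds and why it is sharp (witnessed by $C_5$), at the cost of a small finite check. The order $\le 4$ case in fact needs no check at all. The only connected, co-connected graphs on at most $4$ vertices are $K_1$ and $P_4$, both transitive. So your argument can be made completely independent of Gallai's list.
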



\begin{theorem}
	Let $G$ be a graph with $m$ edges and $\overline{m}$ non-edges. Then the transitive-edge distance of $G$, $\xi_{\tau}(G) \le \min\{m-4, \overline{m} - 4\}$. Moreover, for any graph $G$ on $n$ vertices, $\xi_{\tau}(G) < (n^2-n)/4$.
\end{theorem}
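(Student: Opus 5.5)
The plan is to reduce everything to Proposition~\ref{p1}. For the first bound, suppose $m \ge 4$. Choose any set $S \subseteq E(G)$ with $|S| = m-4$. Then $G - S$ has exactly $4$ edges, so Proposition~\ref{p1} says $G - S$ is transitive. By definition $G$ is then $(m-4)$-edge-distant transitive, and hence $\xi_\tau(G) \le m-4$.

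Symmetrically, suppose $\overline{m} \ge 4$. Choose $S \subseteq E(\overline{G})$ with $|S| = \overline{m}-4$. Then $G+S$ has exactly $4$ non-edges, so it is transitive by Proposition~\ref{p1}, and $\xi_\tau(G) \le \overline{m}-4$.

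Both constructions are always available, so the distance is bounded by the smaller of the two quantities, giving $\xi_\tau(G) \le \min\{m-4,\overline{m}-4\}$. If one of $m, \overline{m}$ is below $4$, Proposition~\ref{p1} gives $\xi_\tau(G)=0$ directly. In that case the bound as literally written is negative, so I would read it as holding whenever the relevant count is at least $4$; this degenerate case is the only point that needs care.

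For the second claim, use $m + \overline{m} = \binom{n}{2} = (n^2-n)/2$. Then
\[
\min\{m,\overline{m}\} \le \frac{m+\overline{m}}{2} = \frac{n^2-n}{4}.
\]
If $\min\{m,\overline{m}\} \ge 4$, the first part gives $\xi_\tau(G) \le \min\{m,\overline{m}\} - 4 < (n^2-n)/4$. Otherwise $\xi_\tau(G) = 0$, which is below $(n^2-n)/4$ for all $n \ge 2$.

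For $n \le 1$ the right-hand side is $0$ and $\xi_\tau(G) = 0$, so the strict inequality fails. These trivial graphs must be excluded, or the inequality read non-strictly for them. No step is a genuine obstacle: the argument is a direct application of Proposition~\ref{p1} together with the averaging inequality, and the only care needed is in these small boundary cases.
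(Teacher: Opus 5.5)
Your argument is correct and is the paper's argument written out in full: delete $m-4$ edges or add $\overline{m}-4$ non-edges and invoke Proposition~\ref{p1}, then bound $\min\{m,\overline{m}\}$ by $(m+\overline{m})/2=(n^2-n)/4$. The degenerate cases you flag are real defects of the statement as written (the bound is negative when $\min\{m,\overline{m}\}<4$, and the strict inequality fails for $n\le 1$), and the paper's one-line proof passes over them; your reading, effectively $\xi_\tau(G)\le\max\{0,\min\{m,\overline{m}\}-4\}$ together with $n\ge 2$ for the second claim, is the right way to repair it.
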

\begin{proof}
Let $G$ be a graph on $n$ vertices, $m$ edges, and $\overline{m}$ non-edges. By \textit{Proposition}~\ref{p1}, it follows that 
\[
\xi_{\tau}(G) \leq \min\{m-4, \overline{m}-4\}.
\] 
Moreover, since 
\[
\min\{m-4, \overline{m}-4\} < \frac{n^2-n}{4},
\] 
we obtain 
\[
\xi_{\tau}(G) < \frac{n^2-n}{4}.
\]
\end{proof}

As a direct consequence of \textit{Theorem}\ref{t3}, \textit{Corollary}~\ref{c3} and \textit{Theorem}~\ref{t4}, we obtain the following theorem.
\begin{theorem}
Let $G$ be a graph with $n$ vertices, $m$ edges, and $\overline{m}$ non-edges. If $\xi_{\tau}(G) = k$, then there exists an algorithm that, given a graph $G$, a transitive distant-edge set $S$, and a real-valued weight function, outputs a weighted maximum clique (respectively, independent set) of $G$ in $O(2^k f(size(G)))$, where $f(size(G))$ is a linear (respectively, polynomial) function of the size of $G$.
\end{theorem}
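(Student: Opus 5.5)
The plan is to instantiate Theorem~\ref{t3} and Corollary~\ref{c3} with $\GG = \tau$, the class of transitive graphs, taking the per-class running times from Theorem~\ref{t4}.

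First we check the hypotheses. The class $\tau$ is hereditary: it is characterized by Gallai's forbidden induced subgraphs, or, more directly, a transitive orientation of $G$ restricts to a transitive orientation of any induced subgraph. Next, by Theorem~\ref{t4}, WMCP on $\tau$ is solvable in time $O(f(m,n))$ with $f$ linear in $n+m$. Such an $f$ is non-decreasing and lies in $\Omega(m+n)$. For WMISP, Theorem~\ref{t4} gives a polynomial bound $p(n,m)$. We may replace it by $\max\{p(n,m), n+m\}$, still polynomial, and make it non-decreasing by taking a monotone polynomial upper bound. This ensures the hypotheses of Corollary~\ref{c3} hold.

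Now suppose $\xi_\tau(G)=k$ and let $S$ be the given transitive distant-edge set, so $G\in\Ge_k$ or $G\in\Ga_k$ relative to $\tau$. We run Algorithm~\ref{algo4} or Algorithm~\ref{algo3} accordingly. At the leaves of the recursion the input graph lies in $\tau$, and there we call the algorithm of Theorem~\ref{t4}. By Theorem~\ref{t3} this returns a weighted maximum clique in $O(2^k f(m,n))$ time.

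For independent sets we pass to $\overline{G}$. We use the identities $\overline{\Ga_k}=\overline{\GG}^{\mathrm{epex}}_k$ and $\overline{\Ge_k}=\overline{\GG}^{\mathrm{add}}_k$: the same set $S$, read as non-edges or edges of $\overline{G}$, witnesses that $\overline{G}$ is $k$-edge-distant from $\overline{\tau}$. WMCP on $\overline{\tau}$ equals WMISP on $\tau$, which is polynomial by Theorem~\ref{t4}. So Corollary~\ref{c3} gives the $O(2^k f(\mathrm{size}(G)))$ bound with $f$ polynomial.

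The one point needing care is the size measure. Complementing can change $m$ to $\binom{n}{2}-m$, so we express everything in terms of $\mathrm{size}(G)$, counting $n$ and $m$ together with the cost of forming the complement, $O(n^2)$. This is absorbed into the polynomial $f$ for the independent set case. In the clique case we never complement, so $f$ stays linear.

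We also need the recursion's subgraphs to stay no larger than $G$: they are induced subgraphs of $G$ or of $G-xy$. Monotonicity of $f$ then bounds each of the $2^k$ leaf calls by $f(\mathrm{size}(G))$. This bookkeeping is the main, though minor, obstacle; everything else is direct citation.
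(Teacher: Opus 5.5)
Your proposal is correct and takes essentially the same route as the paper, which states the result as a direct consequence of Theorem~\ref{t3}, Corollary~\ref{c3} and Theorem~\ref{t4} with $\GG=\tau$. Your extra bookkeeping only spells out what the paper leaves implicit: heredity of $\tau$, a non-decreasing bound in $\Omega(m+n)$ for WMISP, and the $O(n^2)$ complementation cost absorbed into the polynomial in the independent-set case.
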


\section{Conclusion and Future Work}
In this paper, we studied the Weighted Maximum Clique Problem (WMCP) and the Weighted Maximum Independent Set Problem (WMISP) on the edge-apex and edge-add classes of hereditary graph classes. A central concept in our work is the $\GG$-edge distance $\xi_{\GG}(G)$, which quantifies how far a graph is from a hereditary class in terms of edge deletions or additions. By parameterizing with respect to this distance, we established that both WMCP and WMISP can be solved in $O^*(2^k)$ time on graphs whose $\GG$-edge distance is~$k$, provided these problems are solvable in polynomial time within the class~$\GG$. This result yields fixed-parameter tractable algorithms for a wide range of hereditary graph classes, highlighting the power of structural distance as a tool in algorithm design.

A notable application of our framework lies in the class of transitive graphs. Since these graphs admit polynomial-time solutions for both WMCP and WMISP, our parameterized results directly imply tractability for graphs that are within bounded transitive-edge distance.

The contributions of this work open several avenues for future research. One natural direction is to investigate whether similar parameterizations based on edge or vertex modifications can be applied to other classical NP-hard problems, such as graph coloring, domination, or Hamiltonicity. Another promising direction is to study specific graph classes more closely and establish upper bounds on their transitive-edge distance.


\end{document}